\providecommand{\U}[1]{\protect\rule{.1in}{.1in}}
\newtheorem*{theorem*}{Theorem}
\newtheorem{conjecture}{Conjecture}
\newtheorem{proposition}{Proposition}
\newtheorem{remark}{Remark}
\begin{document}

\title{Upper bounds on the purity of Wigner non-negative quantum states that verify the Wigner entropy conjecture}

\author{Qipeng Qian}
\affiliation{Program in Applied Mathematics, The University of Arizona, Tucson, Arizona 85721, USA}
\author{Christos N. Gagatsos}
\affiliation{Department of Electrical and Computer Engineering, The University of Arizona, Tucson, Arizona, 85721, USA}
\affiliation{Wyant College of Optical Sciences, The University of Arizona, Tucson, Arizona, 85721, USA}
\affiliation{Program in Applied Mathematics, The University of Arizona, Tucson, Arizona 85721, USA}

\begin{abstract}
We present analytical results toward the Wigner entropy conjecture, which posits that among all physical Wigner non-negative states the Wigner entropy is minimized by pure Gaussian states for which it attains the value $1+\ln\pi$. 
Working under a minimal set of constraints on the Wigner function, namely, non-negativity, normalization, and the pointwise bound $\pi W\le 1$, we construct an explicit hierarchy of lower bounds $B_n$ on $S[W]$ by combining a truncated series lower bound for $-\ln x$ with moment identities of the Wigner function. 
This yields closed-form sufficient conditions, expressed in terms of the state purity $\mu:=\operatorname{Tr}(\rho^2)=2\pi\int dq\,dp\,W(q,p)^2$, ensuring $S[W]\ge 1+\ln\pi$. 
In particular, we first prove that all physical Wigner-non-negative states with $\mu\le 4-2\sqrt{3}$ satisfy the Wigner entropy conjecture.
We further obtain a systematic purity-only relaxation of the hierarchy,
whose limiting sufficient condition is $\mu\le 2/e$. 
Finally, we show that the threshold $2/e$ is sharp under the relaxed
constraints considered here, thereby identifying the need for additional
quantum-realizability information in the remaining high-purity regime. 
\end{abstract}
\maketitle

\section{Introduction}\label{sec:intro}

Uncertainty relations lie at the heart of quantum theory, quantifying the incompatibility of non-commuting observables and shaping fundamental limits in sensing, communication, and information processing. It originates from Heisenberg's principle \cite{heisenberg1927anschaulichen}, which quantifies the incompatibility of the quadratures $q$ and $p$ via variances. A widely used refinement replaces variances by entropies; see, e.g., \cite{Coles2017}. In particular, Bia{\l}ynicki-Birula and Mycielski derived an entropic uncertainty relation that lower bounds the sum of the Shannon differential entropies of the position and momentum densities, and is stronger than the variance-based bound \cite{bialynicki1975uncertainty}. 

Moreover, since Shannon differential entropy is subadditive, a lower bound on the entropy of a joint phase-space density would imply (and potentially strengthen) the position--momentum entropic uncertainty relation. 
A closely related quantity is the Wehrl entropy, defined as the Shannon differential entropy of the Husimi $Q$ function. Wehrl conjectured that it is minimized by coherent states \cite{Wehrl1979}; this was proved by Lieb \cite{Lieb1978} and later extended, including sharp results for R\'enyi--Wehrl entropies in broad coherent-state settings \cite{LiebSolovej2014}. 
In this sense, a Wehrl-type lower bound refers to an entropy minimization principle on phase space. The Wigner entropy conjecture below is analogous in spirit, but it concerns the Shannon entropy of a non-negative Wigner function instead of the Husimi $Q$ function. Note that the latter is not invariant under symplectic transformations \cite{hertzthesis,Hertz_2019}, i.e., the entropy of the Husimi $Q$ function for the vacuum and single-mode squeezed vacuum states, is not the same. 

In continuous-variable quantum optics, the Wigner function $W(q,p)$ is arguably the most central phase-space representation. It is a real quasi-probability distribution which describes fully the quantum state, but it can take negative values. Of special interest is the convex subset of Wigner-non-negative states, whose Wigner function is everywhere non-negative and can therefore be viewed as a genuine probability density on phase space.  Motivated by this classical-looking interpretation and the subadditivity of Shannon’s differential entropy, \cite{VanHerstraeten2021quantum,hertz2017,hertzthesis} introduced the Wigner entropy as the Shannon differential entropy of the Wigner function $W(q,p)$ and conjectured a Wehrl-type lower bound: 
\begin{conjecture}
\label{Conj1}
For any Wigner non-negative state, 
\begin{eqnarray}
    \label{eq: wigner conj}
    S[W]\geq 1+\ln\pi,    
\end{eqnarray}
while $S$ denotes for Shannon entropy and the lower bound is attained by any pure Gaussian state.
\end{conjecture}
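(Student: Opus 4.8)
The plan is to recast the conjecture as a moment inequality and then control it through a hierarchy of polynomial lower bounds on the entropy integrand. First I would use normalization to write
\begin{equation}
S[W] = -\int W\ln W\,dq\,dp = \ln\pi + \int W\bigl(-\ln(\pi W)\bigr)\,dq\,dp,
\end{equation}
so that the target $S[W]\ge 1+\ln\pi$ is equivalent to $\int W(-\ln(\pi W))\,dq\,dp\ge 1$. The crucial structural input is the pointwise bound $0\le\pi W\le 1$, which confines the argument of the logarithm to $(0,1]$ and makes it amenable to a one-sided series estimate.

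Next I would exploit the expansion $-\ln x=\sum_{k\ge1}(1-x)^k/k$, whose terms are all non-negative on $(0,1]$; truncating it therefore yields the pointwise lower bound $-\ln(\pi W)\ge\sum_{k=1}^{n}(1-\pi W)^k/k$. Integrating against $W\ge 0$ produces the advertised hierarchy
\begin{equation}
B_n=\ln\pi+\sum_{k=1}^{n}\frac1k\int W\,(1-\pi W)^k\,dq\,dp\ \le\ S[W],
\end{equation}
with $B_n\nearrow S[W]$ by monotone convergence. Expanding $(1-\pi W)^k$ binomially expresses each $B_n$ through the phase-space moments $\int W^m\,dq\,dp$, the second of which is the purity itself, $\pi\int W^2=\mu/2$.

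To obtain purity-only statements I would view $W\,dq\,dp$ as a probability measure and apply Jensen's inequality to the convex maps $y\mapsto y^k$, giving $\int W(1-\pi W)^k\ge(1-\mu/2)^k$. Retaining the lowest nontrivial order with the sharper moment inequality $\int W^3\ge(\int W^2)^2$ gives the quadratic condition $\mu^2-8\mu+4\ge0$, i.e.\ $\mu\le4-2\sqrt3$, while summing the full Jensen-relaxed series collapses the hierarchy to $S[W]\ge\ln(2\pi/\mu)$ and hence to the simpler $\mu\le2/e$. These are exactly the closed-form sufficient conditions quoted in the abstract.

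The hard part — and the reason the full conjecture cannot be settled this way — is that purity and phase-space moments do not by themselves encode physicality. Non-negativity, normalization, and $\pi W\le1$ are satisfied by manifestly unphysical profiles, e.g.\ two-level ``staircase'' functions $\pi W=\mathbf{1}_{S_1}+c\,\mathbf{1}_{S_2}$, which can be arranged to have $\mu=1$ while $S[W]<1+\ln\pi$. Any argument resting solely on the three constraints and $\mu$ must therefore stall short of the extremal value $\mu=1$, and closing the remaining gap would require injecting genuine quantum-state constraints (for instance positivity of the associated density operator, or the full set of Wigner-function admissibility conditions) beyond the purity hierarchy developed here.
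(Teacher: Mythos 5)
Your proposal is not a proof of the statement, but neither is anything in the paper: the statement is precisely the Wigner entropy conjecture, which the paper explicitly leaves open and only supports with partial results. What you have reconstructed is essentially the paper's own program, by the same route: the rewriting $S[W]=\ln\pi+\int W(-\ln(\pi W))$, the truncated series for $-\ln x$ on $(0,1]$ giving the hierarchy $B_n$, the Jensen/moment relaxation in the purity $\mu$ (your $\int W^3\ge(\int W^2)^2$ step at $n=2$ is exactly Jensen's inequality $\mathbb{E}[X^2]\ge\mathbb{E}[X]^2$ for $X=\pi W$, and reproduces the paper's threshold $\mu\le 4-2\sqrt{3}$), the $n\to\infty$ limit $S[W]\ge\ln(2\pi/\mu)$ yielding $\mu\le 2/e$, and the observation that flat-top (or staircase) profiles satisfying the three minimal constraints can have $\mu$ up to $1$ and beyond with $S[W]<1+\ln\pi$, which is the content of the paper's sharpness proposition. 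Your closing paragraph correctly identifies, as the paper does, that closing the gap to the full conjecture requires genuine quantum-realizability constraints beyond non-negativity, normalization, and $\pi W\le 1$; so as an account of what can actually be established under these hypotheses, your proposal is correct and matches the paper's approach.
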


Specifically, the marginals of the Wigner function coincide with the probability densities of the variables $q$ and $p$, defined by $P_q=\int dp W(q,p)$ and $P_p=\int dq W(q,p)$. The Bia{\l}ynicki-Birula-Mycielski inequality \cite{bialynicki1975uncertainty}, together with the subadditivity of Shannon’s differential entropy, implies 
\begin{eqnarray}
\label{Bialynicki}
S[P_q]+S[P_p]&\geq& 1+\ln\pi,\\
\label{eq:subadd} S[P_q]+S[P_p]&\geq& S[W], 
\end{eqnarray}
If Conjecture \ref{Conj1} holds, then inequalities \eqref{Bialynicki} and \eqref{eq:subadd} can be combined as 
\begin{eqnarray}
S[P_q]+S[P_p]\geq S[W] \geq 1+\ln \pi,
\end{eqnarray}
which yields a stronger entropic uncertainty relation for Wigner non-negative states. 

Here and throughout, we denote the state purity by
$\mu:=2\pi\int dq\,dp\,W^2(q,p)$.
The pure-state endpoint is characterized by Hudson's theorem
\cite{Hudson1974}, which states that a pure continuous-variable state has
an everywhere non-negative Wigner function if and only if it is Gaussian.
Consequently, the physical Wigner-non-negative states with $\mu=1$ are
precisely pure Gaussian states and attain equality in
Conjecture~\ref{Conj1}. 

A convenient one-parameter family that interpolates between the Shannon case and higher moments is given by the Wigner--R\'enyi entropies defined in \cite{Kalka2023,Dias2023}, which reduce to the Wigner entropy when $\alpha\to 1$. The Wigner entropy and phase-space methods have been used in diverse contexts, ranging from noisy polarizers \cite{Aifer2023} to high-energy physics \cite{Barata2023} and non-equilibrium field theory \cite{Reiche2022}; see also \cite{Garttner2023,VanHerstraeten2023continuous,Salazar2023} for broader phase-space perspectives. 

While the conjecture remains unanswered for the general Wigner-non-negative states, several partial results have been proven: \cite{Dias2023} proved that the conjectured lower bound holds for the Wigner--R\'enyi entropy of all Wigner-non-negative states in the regime for R\'enyi parameter $\alpha\ge 2$, but their argument does not extend to $\alpha \rightarrow 1$, which is what a full proof of the Wigner entropy conjecture would require. Through a different direction, \cite{LiebSolovej2014} established sharp coherent-state entropy inequalities for all R\'enyi parameters via the Wehrl-to-Wigner connection for beam-splitter outputs with one vacuum input. This implies the conjectured bound for a special beam-splitter subclass for all $\alpha$ (including $\alpha \rightarrow 1$). Furthermore, \cite{VanHerstraetenCerf2025} recently proved the conjecture for the Wigner--R\'enyi entropy for R\'enyi parameter $\alpha\ge 1/2$, which in particular includes the Wigner entropy but is for beam-splitter states only. 

In addition, more specific Wigner-entropy results have been obtained.  The conjectured Wigner entropy lower bound holds for all Wigner non-negative passive states (mixtures of Fock states with non-increasing weights) is provided in \cite{VanHerstraeten2021quantum}. Using an underlying continuous majorization relation, \cite{VanHerstraeten2023continuous} proved the bound for any Wigner non-negative
mixture of the first three Fock states. Finally, in \cite{PhysRevA.110.012228} the validity of the conjecture was proven for (generally mixed states) written on the Fock states $\{|0\rangle,|1\rangle\}$ basis and a sufficient condition for the conjecture's validity for general mixed states was derived. 

Taken together, the above results establish the conjectured lower bound by exploiting additional structure. Our result enters from a different direction. We do not try to sharpen these class-specific proofs; instead, we ask what can already be certified from the universal constraints shared by every single-mode Wigner-non-negative state, namely positivity, normalization, and the pointwise Wigner bound. Thus the present approach is complementary to the existing literature. Starting from these minimal constraints also provides a useful baseline for understanding what information is missing from a purely universal argument. In this sense, the gap between our sufficient bounds and the conjectured bound is not merely a limitation, but also indicates where additional quantum or state-specific structure must enter. 

\medskip
\noindent
\textbf{Our contribution.}
In this paper, we develop a complementary, minimal-constraint approach aimed at producing simple purity-based sufficient conditions for the Wigner entropy conjecture. Rather than exploiting a detailed characterization of the set of Wigner-nonnegative quantum states, we assume only a small collection of constraints that are necessary for any single-mode Wigner-non-negative state: 
\begin{enumerate}
\item[(i)] $W(q,p)\ge 0$ and $\int dq\,dp\, W(q,p)=1$; 
\item[(ii)] the pointwise bound $\|W\|_\infty \le 1/\pi$. 
\end{enumerate}
Under these constraints, we derive an explicit, analytically tractable family of lower bounds on the Wigner entropy $S[W]$. Our approach yields a systematic hierarchy $B_n$ and, in particular, simple purity-based sufficient conditions ensuring the conjectured bound $S[W]\ge 1+\ln\pi$. 
This culminates in the closed-form threshold $\mu\le 2/e$, which provides a fast check for broad families of Wigner-non-negative states. Together with the pure-state endpoint $\mu=1$ characterized by Hudson's theorem \cite{Hudson1974}, this leaves the intermediate mixed-state interval $2/e<\mu<1$ outside the scope of our general purity-only certificate. 

To the best of our knowledge, this work provides the first explicit and systematic purity-based route toward validating the Wigner-entropy conjecture under minimal constraints. Beyond the concrete thresholds reported here, the resulting moment-test viewpoint suggests a new direction for future improvements: once additional structural information is available (e.g., nontrivial constraints on higher moments or quantum-realizability conditions), the same framework can be strengthened in a controlled manner. In particular, we will show that, under constraints (i)--(ii) alone, the limiting purity-only condition $\mu\le 2/e$ is optimal in the sense that no sufficient condition depending only on $\mu$ can be improved without invoking further constraints. 

This paper is organized as follows.  In Sec.~\ref{sec: defs}, we fix notation and state the constraints we considered throughout this work. In Sec.~\ref{sec: bounds}, we derive our main purity-dependent bounds from a truncated series expansion of $-\ln x$. We conclude in Sec.~\ref{sec: conclusion} with comments and directions for further
improvements.

\section{Definitions and conditions}\label{sec: defs}

For clarity of notation, we focus on a single bosonic mode with canonical quadratures $(q,p)$ and associated Wigner function $W(q,p)$; the multi-mode generalization follows by replacing phase space $\mathbb{R}^2$ with $\mathbb{R}^{2n}$ and adjusting the corresponding $\pi$-normalizations, so we do not repeat it explicitly. 

Our object of interest is the Wigner entropy 
\begin{align}
S[W]:=-\int dq\,dp\, W(q,p)\,\ln W(q,p),
\label{eq: def wiger entropy}
\end{align}
together with its conjectured lower bound for Wigner-non-negative states, namely $S[W]\ge 1+\ln\pi$ [cf.~Eq.~\eqref{eq: wigner conj}]. 

In this work, rather than imposing the full set of quantum-realizability constraints that characterize physical Wigner functions, we adopt a minimal set of conditions under which explicit analytic bounds can be derived. 
Here, by quantum-realizability constraints we mean the conditions required for a phase-space function to arise as the Wigner function of a positive trace-one density operator. These constraints are stronger than ordinary positivity and normalization: a normalized non-negative function on phase space need not correspond to a physical quantum state. 
The conditions we assume are, 
\begin{eqnarray}
 \label{eq: Cond-nonneg}  W(q,p)&\geq& 0,\\
  \label{eq: Cond-regularization}  \int dq\,dp\, W(q,p)&=&1,\\
 \label{eq: Cond-pointwise}  \pi W(q,p) &\leq& 1.
\end{eqnarray}
Conditions \eqref{eq: Cond-nonneg} and \eqref{eq: Cond-regularization} allow us to treat $W$ as a normalized probability density.
The pointwise bound \eqref{eq: Cond-pointwise} has a direct quantum origin. In the single-mode case, the Wigner function can be represented, in the convention used here, as the expectation value of a displaced parity operator divided by $\pi$. Since the parity operator has eigenvalues $\pm 1$, its expectation value is bounded in absolute value, giving $|W(q,p)|\leq 1/\pi$ for any physical single-mode Wigner function. For Wigner-non-negative states, this reduces to $0\leq W(q,p)\leq 1/\pi$, which is precisely \eqref{eq: Cond-nonneg} together with \eqref{eq: Cond-pointwise}. 
The pointwise bound \eqref{eq: Cond-pointwise} is therefore a necessary constraint for physical Wigner functions, but it is not sufficient to guarantee that $W$ corresponds to a positive semidefinite density operator. Our results should therefore be understood as providing sufficient certificates for the conjectured entropy bound, rather than a characterization of all cases in which the conjecture holds: we derive bounds that hold for any function $W$ obeying \eqref{eq: Cond-nonneg}--\eqref{eq: Cond-pointwise}, and therefore also apply to any physical Wigner-non-negative state whenever the corresponding sufficient conditions are met. 

Equivalently, the admissible class considered here is a relaxation of the physical class: every Wigner-non-negative quantum state gives a function satisfying Eqs.~\eqref{eq: Cond-nonneg}--\eqref{eq: Cond-pointwise}, but the converse is false. 
Necessary and sufficient conditions for quantum realizability can be
formulated at the level of the symplectic Fourier transform of $W$ through the Kastler--Loupias--Miracle-Sol\'e (KLM), or quantum Bochner, conditions \cite{cordero2017positivity,NarcowichDistributions1989}. 
Roughly speaking, in addition to normalization and regularity, the
characteristic function must be of quantum positive type, meaning that an infinite family of phase-twisted matrices must be positive semidefinite for every finite collection of phase-space points. 
These conditions are necessary and sufficient for positivity of the associated Weyl operator, but they are substantially stronger than ordinary pointwise positivity of $W$ and are generally difficult to incorporate into explicit entropy estimates. An explicit function satisfying Eqs.~\eqref{eq: Cond-nonneg}--\eqref{eq: Cond-pointwise} but failing quantum realizability is given in Remark~\ref{rem:flat_nonphysical}. 

These constraints are deliberately chosen to allow us analytic calculations and to isolate a single, physically meaningful parameter. 
In particular, our bounds will be expressed in terms of the state purity, which in the Wigner representation is \cite{Hillery1984DistributionFunctions}: 
\begin{eqnarray}
\label{eq: mu}    
\mu = 2\pi \int dq\,dp\, W^2(q,p). 
\end{eqnarray}
For a general function satisfying only Eqs.~\eqref{eq: Cond-nonneg}--\eqref{eq: Cond-pointwise}, we use the same symbol $\mu$ for the quadratic functional in Eq.~\eqref{eq: mu}, even though it need not be the purity of a physical density operator. 
For physical states, the purity satisfies $\mu\in(0,1]$. Therefore, a proof of the full Wigner entropy conjecture by a purity-based route would require the conjectured lower bound to be established throughout this entire interval. The bounds we presented later in the paper do not achieve this: they reach the conjectured value only when $\mu\leq 2/e$. Thus the relaxation used here certifies the conjecture for a subset of Wigner-non-negative states, but it does not prove the remaining high-purity regime. This is precisely where the minimal constraints \eqref{eq: Cond-nonneg}--\eqref{eq: Cond-pointwise} become too coarse, and where additional quantum-realizability or state-specific constraints are expected to be necessary for approaching the full conjectured bound.

\section{Hierarchy of lower bounds and purity thresholds}\label{sec: bounds}

Having fixed the minimal constraints \eqref{eq: Cond-nonneg}--\eqref{eq: Cond-pointwise}, we now derive an explicit hierarchy of lower bounds on the Wigner entropy $S[W]$. 
The construction relies on a truncated series lower bound for $-\ln x$ on $x\in(0,1]$, which is applicable thanks to the pointwise constraint \eqref{eq: Cond-pointwise}.

\subsection{Truncated-series bound and the hierarchy $B_n$}\label{sec:derivation}

For $0<x\le 1$ and any integer $n\ge 1$, the truncated Taylor expansion of $-\ln x$ about $x=1$ yields the lower bound \cite{OlverNIST2010,RudinPMA} 
\begin{align}
    -\ln x \ge \sum_{k=1}^n \frac{(-1)^k}{k}(x-1)^k. 
    \label{eq: series}
\end{align}
Moreover, using the constraint \eqref{eq: Cond-regularization}, the Wigner entropy in Eq. \eqref{eq: def wiger entropy} can be rewritten as
\begin{align}
    \label{eq:ent2}
    S[W]=\ln\pi-\int dq\,dp\, W(q,p)\, \ln \big(\pi W(q,p)\big).
\end{align}
Under the constraint \eqref{eq: Cond-pointwise}, applying Eq. \eqref{eq: series} to Eq. \eqref{eq:ent2} yields
\begin{align}
\label{eq:lowerBoundn}
S[W]\ge& \ln\pi \nonumber\\
&+ \int dq\,dp\, W(q,p)\sum_{k=1}^n \frac{(-1)^k}{k}\big(\pi W(q,p)-1\big)^k \nonumber\\
=:& B_n,
\end{align}
which holds for all integers $n\ge 1$. The sequence $\{B_n\}_{n\ge 1}$ constitutes an analytically tractable hierarchy: larger $n$ incorporates higher-order moments of $W$ and will only improve the bound.

\subsection{Low-order examples: $n=1$ and $n=2$}\label{sec:examples}

For $n=1$, Eq. \eqref{eq:lowerBoundn} gives 
\begin{align}
    B_1 
    &= \ln \pi - \int dq\,dp\, W(q,p)\big(\pi W(q,p)-1\big) \nonumber\\
    &= 1+\ln\pi-\pi\int dq\,dp\, W^2(q,p) \nonumber\\
    &= 1+\ln\pi-\frac{\mu}{2},
\end{align}
where we used the constraint \eqref{eq: Cond-regularization} and the purity definition in Eq. \eqref{eq: mu}. 
Since $\mu>0$ for any nonzero $W$, enforcing $B_1\ge 1+\ln\pi$ would force $\mu\le 0$, and thus $n=1$ cannot yield a nontrivial sufficient condition.

For $n=2$, we obtain
\begin{align}
    \label{eq:B2_1}
    B_2 =& \ln \pi - \Big(\int dq\,dp\, W(q,p) \nonumber\\
    &\left[-(\pi W(q,p)-1)+\frac{1}{2}\left(\pi W(q,p)-1\right)^2\right]\Big), 
\end{align}
which can be simplified into
\begin{eqnarray}
\label{eq:B2_2}
B_2 = \frac{3}{2}+\ln\pi+\frac{\pi^2}{2}\int dq\,dp\, W^3(q,p) -\mu.
\end{eqnarray}
Since $W\ge 0$, the cubic term is nonnegative; dropping it yields the explicit relaxation
\begin{eqnarray}
B_2 \ge \frac{3}{2}+\ln \pi -\mu. 
\end{eqnarray}
Imposing $B_2\ge 1+\ln\pi$ at this level gives the simple sufficient condition $\mu\le 1/2$. 
In the next subsection we show how to obtain systematically improved purity-only sufficient conditions without discarding the higher-order terms.

\subsection{A systematic purity-only relaxation for all $n\ge 2$}\label{sec:improved}

To convert $B_n$ into explicit conditions depending only on $\mu$, it is convenient to interpret the phase-space integral as an expectation value with respect to the density $W$. 
Define the random variable
\begin{eqnarray}
X \equiv \pi W(q,p),
\end{eqnarray}
where $(q,p)$ is distributed according to $W(q,p)$. Then $0\le X\le 1$ almost everywhere by the constraint \eqref{eq: Cond-pointwise}, and Eq. \eqref{eq:lowerBoundn} then becomes
\begin{eqnarray}
\nonumber B_n
&=& \ln\pi\\
&&+\int dq\,dp\, W(q,p)\sum_{k=1}^n \frac{(-1)^k}{k}\big(\pi W(q,p)-1\big)^k \nonumber\\
&=& \ln\pi + \sum_{k=1}^n \frac{1}{k}\int dq\,dp\, W(q,p)\big(1-\pi W(q,p)\big)^k \nonumber\\
&=& \ln\pi + \sum_{k=1}^n \frac{1}{k}\,\mathbb{E}\big[(1-X)^k\big],
\label{eq:Bn_positive_terms}
\end{eqnarray}
where we used 
$$(\pi W-1)^k=(X-1)^k=(-1)^k(1-X)^k$$
so the alternating signs cancel. Moreover, we have 
\begin{align}
    \mathbb{E}[X]
    &= \int dq\,dp\, W(q,p)\,\pi W(q,p) \nonumber\\
    &= \pi\int dq\,dp\, W^2(q,p) \nonumber\\
    &= \frac{\mu}{2}.
    \label{eq:EX_purity}
\end{align}
For each integer $k\ge 2$, the function $(1-x)^k$ is convex on $[0,1]$, hence Jensen's inequality \cite{HardyLittlewoodPolya} implies for $k\ge 2$ 
\begin{eqnarray}
\mathbb{E}\big[(1-X)^k\big] \ge \big(1-\mathbb{E}[X]\big)^k
= \left(1-\frac{\mu}{2}\right)^k,
\label{eq:Jensen}
\end{eqnarray}
while equality holds for $k=1$. Substituting into Eq. \eqref{eq:Bn_positive_terms} yields the following purity-only lower bound 
\begin{eqnarray}
B_n \ge \ln\pi + \sum_{k=1}^n \frac{1}{k}\left(1-\frac{\mu}{2}\right)^k,\quad n\ge 2.
\label{eq:Bn_purity_only}
\end{eqnarray}

\subsection{Explicit sufficient thresholds and the large-$n$ limit}\label{sec:thresholds}

Following the previous section, a sufficient condition for the Wigner entropy conjecture is $B_n\ge 1+\ln\pi$. By Eq. \eqref{eq:Bn_purity_only}, it suffices that
\begin{eqnarray}
\sum_{k=1}^n \frac{t^k}{k} \ge 1,\qquad t\equiv 1-\frac{\mu}{2}\in[0,1].
\label{eq:tn_condition}
\end{eqnarray}
Equivalently, for each $n\ge 2$, let $t_n\in(0,1)$ denote the unique solution of $\sum_{k=1}^n t_n^k/k = 1$ 
(uniqueness follows since $\sum_{k=1}^n t^k/k$ is strictly increasing on $[0,1]$), then
\begin{eqnarray}
\mu \le \mu_n \equiv 2(1-t_n)\quad \Longrightarrow\quad S[W]\ge 1+\ln\pi.
\end{eqnarray}

For example, when $n=2$, Eq.~\eqref{eq:tn_condition} becomes $t+t^2/2\ge 1$, which implies $t\ge \sqrt{3}-1$ and thus
\begin{eqnarray}
\mu \le 4-2\sqrt{3}\approx 0.5359,
\end{eqnarray}
which is a strictly weaker (i.e. better as it is more inclusive) restriction than $\mu\le 1/2$ obtained above by dropping $\int W^3$.

Finally, as $n\to\infty$,
\begin{eqnarray}
\sum_{k=1}^\infty \frac{t^k}{k} = -\ln(1-t).
\end{eqnarray}
So, Eq. \eqref{eq:tn_condition} becomes $-\ln(1-t)\ge 1$, i.e.\ $1-t\le e^{-1}$. Since $1-t=\mu/2$, we obtain the simple sufficient condition when $n\to\infty$: 
\begin{eqnarray}
\mu \le \frac{2}{e}\approx 0.7358 \quad\Longrightarrow\quad S[W]\ge 1+\ln\pi.
\end{eqnarray}

\subsection{Optimality of the purity-only threshold under constraints \eqref{eq: Cond-nonneg}--\eqref{eq: Cond-pointwise}}
\label{sec:purity_optimality}

The sufficient condition $\mu\le 2/e$ obtained in the large-$n$ limit is in fact sharp under the minimal constraints \eqref{eq: Cond-nonneg}--\eqref{eq: Cond-pointwise}: no strictly larger purity threshold can hold without additional assumptions. This can be seen by an explicit extremal construction. 

\begin{proposition}[Sharpness of the purity-only bound]
\label{prop:sharp_purity_only}
Fix any $\mu\in(0,2]$. There exists a nonnegative, normalized function $W$ obeying
\eqref{eq: Cond-nonneg}--\eqref{eq: Cond-pointwise} with purity
$\mu=2\pi\int dq\,dp\,W(q,p)^2$ and Wigner entropy
\begin{align}
S[W]=\ln\!\left(\frac{2\pi}{\mu}\right).
\end{align}
Consequently, if $\mu>2/e$, then there exists such a $W$ for which $S[W]<1+\ln\pi$.
In particular, no sufficient condition of the form ''$\mu\le \mu_\star \Rightarrow S[W]\ge 1+\ln\pi$''
can hold with $\mu_\star>2/e$ unless further constraints beyond \eqref{eq: Cond-nonneg}--\eqref{eq: Cond-pointwise} are imposed.
\end{proposition}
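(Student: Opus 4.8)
The plan is to prove the proposition by exhibiting a single explicit extremal family, namely the normalized uniform (indicator) densities that saturate the pointwise bound, and then reading off both the purity and the Wigner entropy in closed form. Concretely, for a fixed $\mu\in(0,2]$ I would set the area $|A|=2\pi/\mu$ and take $A\subset\mathbb{R}^2$ to be any measurable set of that area (for definiteness a disk or a square), and define $W=|A|^{-1}\mathbf{1}_A$, i.e.\ the uniform probability density on $A$. This single guess is dictated by the structure of the constraints: the pointwise bound \eqref{eq: Cond-pointwise} is an inequality that is cheapest to saturate with a flat profile, and a flat profile makes $-\ln W$ constant on its support, so the entropy integral collapses to a logarithm of the support area.

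The verification that $W$ obeys \eqref{eq: Cond-nonneg}--\eqref{eq: Cond-pointwise} is then immediate. Nonnegativity and the normalization $\int dq\,dp\,W=1$ hold by construction, while the pointwise bound reads $\pi W=\pi/|A|=\mu/2\le 1$ precisely because $\mu\le 2$. This is the only place the hypothesis $\mu\le 2$ is used, and it is exactly the regime in which a uniform density can simultaneously meet all three conditions. Next, the purity evaluates to $2\pi\int dq\,dp\,W^2=2\pi/|A|=\mu$, confirming that the prescribed value of $\mu$ is attained, and the entropy is $S[W]=-\int dq\,dp\,W\ln W=-\ln(1/|A|)=\ln|A|=\ln(2\pi/\mu)$, using that $W$ is constant on $A$ and zero outside.

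The stated consequence follows by a one-line comparison: $S[W]=\ln(2\pi/\mu)<1+\ln\pi=\ln(e\pi)$ is equivalent to $2/\mu<e$, i.e.\ $\mu>2/e$. Hence for every $\mu>2/e$ the density above satisfies \eqref{eq: Cond-nonneg}--\eqref{eq: Cond-pointwise} yet violates the conjectured bound, so no implication of the form ``$\mu\le\mu_\star\Rightarrow S[W]\ge 1+\ln\pi$'' can survive with $\mu_\star>2/e$ when only the minimal constraints are assumed. I would close by noting that the uniform state with $\mu=2/e$ sits exactly at the threshold, which pins down $2/e$ as sharp rather than merely sufficient.

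I do not expect a genuine technical obstacle here, since the construction is elementary and every computation is exact; the only point requiring care is conceptual rather than calculational. One must emphasize that these uniform indicators are in general \emph{not} bona fide Wigner functions of physical quantum states—they merely obey the relaxed constraints \eqref{eq: Cond-nonneg}--\eqref{eq: Cond-pointwise}. This is precisely what gives the proposition its force: it demonstrates that the gap between the threshold $2/e$ and the conjectural extremal value $\mu\le 1$ cannot be closed by any purity-only criterion, and that progress toward $\mu\le 1$ must instead come from the additional quantum-realizability conditions deliberately omitted in this minimal-constraint framework.
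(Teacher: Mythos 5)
Your proposal is correct and is essentially identical to the paper's proof: the paper also constructs a flat-top (uniform) density of height $\mu/(2\pi)$ on a disk of area $2\pi/\mu$, verifies the three constraints, computes purity $\mu$ and entropy $\ln(2\pi/\mu)$, and concludes sharpness from $\ln(2/\mu)<1 \Leftrightarrow \mu>2/e$. Your only (harmless) generalization is allowing an arbitrary measurable set of the right area rather than a disk, and your closing conceptual remark about non-physicality of the extremizer matches the discussion the paper places after the proposition.
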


\begin{proof}
Let $R:=\sqrt{2/\mu}$ and define the ''flat-top'' density on phase space
\begin{align}
W_{\mathrm{flat}}(q,p):=
\begin{cases}
\displaystyle \frac{\mu}{2\pi}, & q^2+p^2\le R^2,\\[4pt]
0, & \text{otherwise}.
\end{cases}
\end{align}
Then constraint \eqref{eq: Cond-nonneg} is immediate. Moreover, since the disk of radius $R$ has area
$\pi R^2 = 2\pi/\mu$, we have normalization
\begin{align}
\int dq\,dp\, W_{\mathrm{flat}}(q,p)
=\frac{\mu}{2\pi}\frac{2\pi}{\mu}=1,
\end{align}
so \eqref{eq: Cond-regularization} holds. The pointwise constraint \eqref{eq: Cond-pointwise} follows
because
\begin{align}
\pi W_{\mathrm{flat}}(q,p)=\frac{\mu}{2}\le 1
\qquad (\mu\in(0,2]).
\end{align}
Its purity is
\begin{align}
2\pi \int dq\,dp\, W_{\mathrm{flat}}(q,p)^2
=2\pi \left(\frac{\mu}{2\pi}\right)^2 \frac{2\pi}{\mu}
=\mu.
\end{align}
Finally, its entropy can be computed explicitly:
\begin{align}
S[W_{\mathrm{flat}}]
&=-\int dq\,dp\, W_{\mathrm{flat}}(q,p)\,\ln W_{\mathrm{flat}}(q,p)\nonumber\\
&=-\ln\!\left(\frac{\mu}{2\pi}\right)\int dq\,dp\, W_{\mathrm{flat}}(q,p)\nonumber\\
&=\ln\!\left(\frac{2\pi}{\mu}\right).
\end{align}
Therefore $S[W_{\mathrm{flat}}]\ge 1+\ln\pi$ is equivalent to
$\ln(2/\mu)\ge 1$, i.e.\ $\mu\le 2/e$. If $\mu>2/e$, the same construction yields
$S[W_{\mathrm{flat}}]<1+\ln\pi$, proving the claimed sharpness.
\end{proof}

\begin{remark}\label{rem:flat_nonphysical}
The flat-top extremizer also illustrates the gap between the relaxed and
physical classes. For $\mu=1$,
\begin{align}
W_{\mathrm{flat}}(q,p)
=\frac{1}{2\pi}\mathbf{1}_{\{q^2+p^2\le 2\}}
\end{align}
is nonnegative, normalized, and satisfies the pointwise Wigner bound. Let $\rho_{\mathrm{flat}}$ denote the operator obtained from $W_{\mathrm{flat}}$ by the inverse Weyl transform. Using the Wigner function $W_2$ of the Fock state $\ket{2}$, 
\begin{align}
\bra{2}\rho_{\mathrm{flat}}\ket{2}
&=2\pi\int dq\,dp\,W_{\mathrm{flat}}(q,p)W_2(q,p) \nonumber\\
&=1-9e^{-2}<0. 
\end{align}
Hence, its inverse Weyl transform is not positive semidefinite and
$W_{\mathrm{flat}}$ is not quantum realizable.
\end{remark}

Proposition~\ref{prop:sharp_purity_only} explains why any attempt to push purity-based guarantees
toward the extremal regime $\mu\to 1$ must exploit additional structure beyond \eqref{eq: Cond-nonneg}--\eqref{eq: Cond-pointwise}, such as nontrivial constraints on higher moments (e.g.\ $\int W^3$) or quantum-realizability conditions for physical Wigner functions.

\subsection{A variance-refined purity criterion at order $n=2$}
\label{sec:var_refined}

Proposition~\ref{prop:sharp_purity_only} shows that under the minimal constraints
\eqref{eq: Cond-nonneg}--\eqref{eq: Cond-pointwise}, no purely purity-based sufficient condition can improve upon $\mu\le 2/e$. Nevertheless, the hierarchy $B_n$ can yield strictly stronger criteria once one incorporates any additional information that excludes the flat-top extremizer. A convenient abstract way to quantify such information is through fluctuations of the random variable $X=\pi W(q,p)$ under the measure $W$. 

Recall that $\mathbb{E}[X]=\mu/2$ [cf.~Eq.~\eqref{eq:EX_purity}]. We also have 
\begin{align}
\mathrm{Var}(X):=\mathbb{E}\!\left[(X-\mathbb{E}[X])^2\right].
\end{align}

\begin{proposition}[Variance-refined $n=2$ bound]
\label{prop:var_refined_B2}
Under \eqref{eq: Cond-nonneg}--\eqref{eq: Cond-pointwise}, the $n=2$ bound satisfies
\begin{align}
B_2 \ge \ln\pi + t + \frac{t^2}{2} + \frac{\mathrm{Var}(X)}{2},
\quad t:=1-\frac{\mu}{2}.
\label{eq:B2_var_refined}
\end{align}
In particular, a sufficient condition for the Wigner-entropy conjecture is
\begin{align}
t+\frac{t^2}{2}+\frac{\mathrm{Var}(X)}{2}\ge 1
\quad\Longrightarrow\quad
S[W]\ge 1+\ln\pi.
\label{eq:var_suff_condition}
\end{align}
Equivalently, if $\mathrm{Var}(X)\ge v_0$ for some $v_0\in[0,1]$, then
\begin{align}
\mu \le 4-2\sqrt{\,3-v_0\,}
\quad\Longrightarrow\quad
S[W]\ge 1+\ln\pi.
\label{eq:var_threshold}
\end{align}
\end{proposition}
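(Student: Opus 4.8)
The plan is to prove Proposition~\ref{prop:var_refined_B2} by revisiting the exact $n=2$ identity and refining the single step where Jensen's inequality was applied. Recall from Eq.~\eqref{eq:Bn_positive_terms} that the exact value is
\begin{align}
B_2 = \ln\pi + \mathbb{E}[1-X] + \tfrac{1}{2}\,\mathbb{E}\big[(1-X)^2\big].
\label{eq:plan_B2_exact}
\end{align}
The first moment is handled exactly: by Eq.~\eqref{eq:EX_purity}, $\mathbb{E}[1-X]=1-\mu/2=t$, which requires no inequality. The only inequality used in deriving Eq.~\eqref{eq:Bn_purity_only} was the Jensen step on the $k=2$ term, namely $\mathbb{E}[(1-X)^2]\ge(1-\mathbb{E}[X])^2=t^2$. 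The key observation I would exploit is that for $k=2$ the gap in Jensen's inequality is \emph{exactly} the variance, and this gap is computable rather than merely bounded.

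The central step is therefore the elementary variance decomposition. Since $1-X$ is just a shift of $-X$, its variance equals $\mathrm{Var}(X)$, so
\begin{align}
\mathbb{E}\big[(1-X)^2\big]
= \big(\mathbb{E}[1-X]\big)^2 + \mathrm{Var}(1-X)
= t^2 + \mathrm{Var}(X).
\label{eq:plan_var_identity}
\end{align}
Substituting Eq.~\eqref{eq:plan_var_identity} into Eq.~\eqref{eq:plan_B2_exact} gives $B_2=\ln\pi+t+\tfrac{t^2}{2}+\tfrac{1}{2}\mathrm{Var}(X)$, which is in fact an equality; the inequality \eqref{eq:B2_var_refined} follows trivially (and is stated as an inequality only because the remaining higher-order structure is irrelevant here). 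This immediately yields the sufficient condition \eqref{eq:var_suff_condition}: imposing $B_2\ge 1+\ln\pi$ is equivalent to $t+\tfrac{t^2}{2}+\tfrac{1}{2}\mathrm{Var}(X)\ge 1$.

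For the final threshold \eqref{eq:var_threshold}, I would assume the lower bound $\mathrm{Var}(X)\ge v_0$ and reduce the sufficient condition to the worst case $\mathrm{Var}(X)=v_0$, since the left-hand side of \eqref{eq:var_suff_condition} is increasing in $\mathrm{Var}(X)$. This leaves the quadratic inequality $t+\tfrac{t^2}{2}\ge 1-\tfrac{v_0}{2}$ in $t\in[0,1]$. Completing the square gives $(t+1)^2\ge 3-v_0$, hence $t\ge\sqrt{3-v_0}-1$; rewriting in terms of $\mu=2(1-t)$ yields $\mu\le 4-2\sqrt{3-v_0}$, as claimed. I expect no genuine obstacle in this proof: every step is an exact identity or an elementary algebraic manipulation of a single-variable quadratic, and the only point requiring a word of care is confirming that the relevant root of the quadratic lies in $[0,1]$ for $v_0\in[0,1]$, which one checks by noting $t_{\min}=\sqrt{3-v_0}-1$ increases from $\sqrt{2}-1$ to $1$ as $v_0$ ranges over $[0,1]$, and recovers the purity-only threshold $\mu\le 4-2\sqrt{3}$ at $v_0=0$.
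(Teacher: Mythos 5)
Your proof is correct and takes essentially the same route as the paper's: both start from the exact $n=2$ expression $B_2=\ln\pi+\mathbb{E}[1-X]+\tfrac{1}{2}\mathbb{E}[(1-X)^2]$, insert the exact identity $\mathbb{E}[(1-X)^2]=t^2+\mathrm{Var}(X)$ (so \eqref{eq:B2_var_refined} is indeed an equality), and then perform the same quadratic manipulation $t^2+2t+v_0-2\ge 0\iff t\ge\sqrt{3-v_0}-1$ to reach \eqref{eq:var_threshold}. The only slip is in your closing sanity check: $t_{\min}=\sqrt{3-v_0}-1$ \emph{decreases} from $\sqrt{3}-1$ to $\sqrt{2}-1$ as $v_0$ ranges over $[0,1]$ (it does not increase from $\sqrt{2}-1$ to $1$); this is harmless, since the conclusion you actually need, $t_{\min}\in[0,1]$, still holds, and your recovery of $\mu\le 4-2\sqrt{3}$ at $v_0=0$ is correct.
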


\begin{proof}
Starting from Eq.~\eqref{eq:Bn_positive_terms} at $n=2$ we have
\begin{align}
B_2
&=\ln\pi+\mathbb{E}[1-X]+\frac{1}{2}\mathbb{E}[(1-X)^2].
\end{align}
Let $t:=\mathbb{E}[1-X]=1-\mathbb{E}[X]=1-\mu/2$. Expanding the second moment yields the exact
identity
\begin{align}
\mathbb{E}[(1-X)^2]
&=\mathbb{E}\!\left[(1-\mathbb{E}[X]-(X-\mathbb{E}[X]))^2\right] \nonumber\\
&=t^2+\mathrm{Var}(X).
\end{align}
Substituting into the expression for $B_2$ gives \eqref{eq:B2_var_refined}.
Imposing $B_2\ge 1+\ln\pi$ and canceling $\ln\pi$ yields \eqref{eq:var_suff_condition}.
Finally, \eqref{eq:var_suff_condition} is equivalent to
$t^2+2t+\mathrm{Var}(X)-2\ge 0$; solving this quadratic with $t\in[0,1]$ gives
$t\ge \sqrt{3-\mathrm{Var}(X)}-1$, which implies \eqref{eq:var_threshold} whenever
$\mathrm{Var}(X)\ge v_0$.
\end{proof}

\begin{remark}
When $\mathrm{Var}(X)=0$, the random variable $X$ is almost surely constant under $W$, which
corresponds precisely to the flat-top extremizer in Proposition~\ref{prop:sharp_purity_only}.
In this case \eqref{eq:var_threshold} reduces to $\mu\le 4-2\sqrt{3}$, matching the $n=2$
purity-only relaxation. Any additional constraint that enforces a nontrivial lower bound
$\mathrm{Var}(X)\ge v_0>0$ therefore yields a strictly less restrictive purity threshold at order $n=2$.
\end{remark}

\section{Conclusion}\label{sec: conclusion}

We developed an analytically tractable framework for establishing lower bounds on the Wigner entropy for Wigner non-negative states. Starting from the minimal constraints \eqref{eq: Cond-nonneg}--\eqref{eq: Cond-pointwise}, we constructed a hierarchy of bounds $B_n$ (Sec.~\ref{sec: bounds}) by combining a truncated series lower bound for $-\ln x$ with moment identities of the Wigner function. This yields a systematic route to sufficient conditions ensuring the conjectured inequality $S[W]\ge 1+\ln\pi$.

A central outcome is a family of explicit purity-based tests. By interpreting $X=\pi W(q,p)$ as a bounded random variable with respect to the probability density $W$ and applying Jensen's inequality, we derived a purity-only relaxation of $B_n$ for all $n\ge 2$ (Sec.~\ref{sec:improved}). This leads to closed-form thresholds $\mu\le \mu_n$ guaranteeing $S[W]\ge 1+\ln\pi$ (Sec.~\ref{sec:thresholds}), including $\mu\le 4-2\sqrt{3}$ and $\mu\le 2/e$. Therefore, our results render all numerical attempts to test the conjecture for states with $\mu \leq 2/e$ unnecessary. 
Rewording that, we could say that, until a full proof of the conjecture, 
numerical tests may focus on high-purity Wigner-non-negative states in the remaining mixed-state regime $2/e<\mu<1$, while the pure-state endpoint $\mu=1$ is already characterized by Hudson's theorem
\cite{Hudson1974}. 

Beyond providing fast analytical checks applicable to broad families of Wigner non-negative phase-space densities, our results also clarify the limitations of relying on purity alone under \eqref{eq: Cond-nonneg}--\eqref{eq: Cond-pointwise}. 
In particular, if one aims to extend this purity-based approach to the remaining mixed-state regime $2/e<\mu<1$, one may need to incorporate additional quantum-realizability constraints---that is, conditions ensuring that $W$ comes from a positive semidefinite density operator---beyond our minimal assumptions. 
At a more abstract level, our framework already indicates concrete ``interfaces'' for such refinements: for example, at order $n=2$ the bound improves as soon as one can exclude the flat-top extremizer by enforcing a lower bound on $\mathrm{Var}(X)$, or, more generally, by controlling higher moments of $W$. 
A natural direction is to combine the present hierarchy with physicality constraints expressed at the level of characteristic functions or positive-type conditions, or with structural constraints available for special families (e.g., passive states or majorization-based classes). 
Finally, while we focused on the single-mode setting for notational simplicity, the underlying construction extends to $n$-mode phase space with the corresponding $(2\pi)^n$ normalizations, and it would be interesting to investigate how the resulting thresholds compare with the best known results for multimode beam-splitter constructions.

 \begin{acknowledgements}
 The authors are thankful to Zacharie Van Herstraeten (INRIA) and Nicolas Cerf (ULB) for useful discussions. After the completion of this work we were made aware that similar unpublished results have been derived in \cite{UlysseUnpub,NicolasUnpub}.

\end{acknowledgements}

\bibliography{refs_update.bib}

\end{document}